\newtheorem{theorem}{Theorem}
\newtheorem{proposition}[theorem]{Proposition}%
\newtheorem{lemma}[theorem]{Lemma}
\newcommand{\ue}{\mathrm{e}}
\newcommand{\ui}{\mathrm{i}}
\newcommand{\UI}{\mathrm{I}}
\newcommand{\ud}{\mathrm{d}}
\begin{document}


\title[Circulant graphs as an example of discrete quantum ergodicity]{Circulant graphs as an example of discrete quantum unique ergodicity}


\author*[1]{\fnm{Jon} \sur{Harrison}}\email{jon\_harrison@baylor.edu}

\author[1]{\fnm{Clare} \sur{Pruss}}\email{clare\_pruss1@baylor.edu}

\affil*[1]{\orgdiv{ Department of Mathematics}, \orgname{Baylor University}, \orgaddress{\street{1410 S. 4th Street}, \city{Waco}, \postcode{76706}, \state{TX}, \country{USA}}}


\abstract{

A discrete analog of quantum unique ergodicity was proved for Cayley graphs of quasirandom groups by Magee, Thomas and Zhao \cite{MTZ23}.  They show that for large graphs there exist real orthonormal basis of eigenfunctions of the adjacency matrix such that quantum probability measures of the eigenfunctions put approximately the correct proportion of their mass on subsets of the vertices that are not too small.  We investigate this property for Cayley graphs of cyclic groups (circulant graphs).   We observe that there exist sequences of orthonormal eigenfunction bases which are perfectly equidistributed.  However, for sequences of 4-regular circulant graphs of prime order, we show that there are no sequences of real orthonormal bases where all sequences of eigenfunctions equidistribute.  To obtain this result, we also prove that, for large 4-regular circulant graphs of prime order, the maximum multiplicity of the eigenvalues of the adjacency matrix is two.
}

\keywords{quantum chaos, spectral graph theory, quantum unique ergodicity}


\pacs[MSC Classification]{81Q50, 05C50, 58J51}

\maketitle

\section{Introduction and Main Results}\label{sec intro}

Quantum Ergodicity (QE) was first identified as a property of the eigenfunctions of the Laplace-Beltrami operator on Riemannian manifolds \cite{S74, CdV85, Z87, GL93, ZZ96}.  When the geodesic flow is ergodic the probability density associated to eigenfunctions of the Laplacian tends to the uniform measure in the semiclassical limit for any subsequence of eigenfunctions of density one.\footnote{A subsequence that excludes a vanishingly small proportion of the eigenfunctions.}    Quantum ergodicity is a distinguishing feature of the quantum mechanics of a classically chaotic system, often referred to as quantum chaos.  More recently quantum systems with a stronger form of QE, Quantum Unique Ergodicity (QUE), have been identified.  For a system with QUE the sequence of probability measures associated to any subsequence of eigenfunctions tends to the uniform measure.  Consequently, a system with QUE cannot exhibit scarring, where probability measures associated to a subsequence of eigenfunctions tend to a delta measure localized on an unstable periodic orbit.  QUE was conjectured by Rudnick and Sarnak \cite{RS94} for compact hyperbolic manifolds and first proven by Lindenstrauss on arithmetic surfaces \cite{L06}.

QE has been investigated on quantum graphs in a number of settings \cite{KMW03,BKW04,BKS07,WB15}.  Quantum graphs are metric graphs, with a self adjoint operator, typically the Laplace or Schr\"odinger operator, acting on functions defined on a set of intervals associated with the edges, see \cite{IntroQuantumGraphs, GS06, SpecGeomGraphs}. 
In the case of discrete graphs investigation of the distribution of eigenfunctions was pioneered on regular  graphs \cite{S07,DP12,BL13,AM15,AS19}.  For example, in \cite{AM15} Anatharaman and Le Mason establish QE for families of large regular expanders (sparse graphs with high connectivity), where the expander condition plays the role of the ergodicity assumption in QE results for manifolds.    While the circulant graphs we consider are large regular graphs they are not expanders as the adjacency matrix has eigenvalues that are arbitrarily close to the largest eigenvalue in the large graph limit, see section \ref{sec spectra}.  
QUE was proved for random regular graph models in \cite{BKY17}.  Where, in the random graph model, equidistribution of the eigenfunctions holds for every sequence of eigenfunctions at the expense of holding for almost all sequences of graphs.  

For Cayley graphs QE was investigated in \cite{NSSZ23,SSZ22}.
In this article we consider the distribution of eigenfunctions of the adjacency matrix of large circulant graphs.  A circulant graph is a Cayley graph of the cyclic group $\mathbb{Z}_n$ where $n$ is the size of the graph, see Fig. \ref{fig: circulant graph 13} for an example of a circulant graph.  The undirected Cayley graph is defined by a choice of $m$ generators $a_1<a_2<\dots< a_m$ with $a_j \in \{1,\dots, n/2\}$ producing a $2m$-regular graph, see section \ref{sec circulant}.  We examine families of circulant graphs $C_n(\mathbf{a})$ in the large graph limit where the set of generators $\mathbf{a}=(a_1,\dots,a_m)$ is fixed.  
The distribution of the diameters of circulant graphs was investigated by Marklof and St\"ombergson \cite{MS13} and spectral properties of quantum circulant graphs were described in \cite{HS19} in the context of quantum chaology.  The number of periodic orbits on circulant graphs was evaluated in \cite{EHH??}.

\begin{figure}[htb]
\centering
\includegraphics[width=5cm]{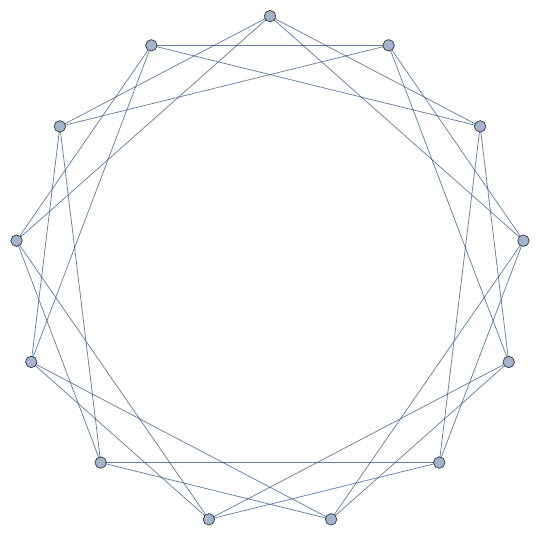}
\caption{The circulant graph $C_{13}(2, 3)$}\label{fig: circulant graph 13}
\end{figure}

Recently, Magee, Thomas and Zhao \cite{MTZ23} investigated a form of QUE for the discrete Cayley graphs of quasirandom groups.  An alternative proof avoiding representation theory has been obtained by Burq and Letrouit \cite{BL??}.
In \cite{MTZ23} the authors discuss a parallel of QUE for discrete graphs.  In analogy with the semiclassical limit of quantum graphs 
 \cite{GS06}, they consider a sequence of metric graphs $\{ G_j\}$ where $G_j$ has a vertex set $V_j$ and $|V_j|\to \infty$.  For each graph the operator is the $|V_j| \times |V_j|$ adjacency matrix $A(G_j)$ which records the connectivity of the graph; $[A(G)]_{uv}$ is one if the vertices $u$ and $v$ are adjacent in $G$ and zero otherwise.
The adjacency matrix acts on functions in $\ell^2(V_j)$
(which we also write $\ell^2(G_j)$).  
A normalized vector $\psi \in \ell^2(G)$ with $||\psi||_{\ell^2}=1$ defines a probability measure, 
\begin{equation}
    \mu_\psi = \sum_{v\in V} |\psi(v)|^2 \delta_v \ .
\end{equation}
where $\delta_v$ is the unit measure on the vertex $v$.  

Let $\{S_j\}$ be a sequence of vertex sets $S_j\subset V_j$ with $|S_j|/|V_j|\to p$, so we can think of vertex sets that contain approximately a proportion $p$ of the graph. Also fix a sequence $\{\varphi_j\}$ where $\varphi_j$ is a normalized eigenfunction of $A(G_j)$.  Following \cite{MTZ23} we say \emph{Discrete Quantum Unique Ergodicity} (DQUE) holds for the pair of sequences $\{\varphi_j\}$ and $\{S_j\}$ if 
\begin{equation}
    y_j=\mu_{\varphi_j}(S_j) \to p \ ,
\end{equation}
i.e. the measure associated to vertex sets by eigenfunctions in the sequence is in proportion to the size of the sets. This parallels QUE when this property holds for all sequences of vertex subsets of size $p$ and all sequences of eigenfunctions from a given sequence of bases.


Our first observation is that circulant graphs admit orthonormal bases where all sequences of eigenvectors and vertex subsets display DQUE.
\begin{proposition}\label{prop: Circulant DQUE}
For any sequence of circulant graphs $\{C_n(\mathbf{a}) \}_{n\in \mathcal{A}}$ 
there  exists a sequence 
$\{\mathcal{B}_n\}_{n\in \mathcal{A}}$ with $\mathcal{B}_n$ an orthonormal basis of eigenvectors of $A(C_n(\mathbf{a}))$, where 
\begin{equation}
    \lim_{n\to \infty } y_n = p \ .
\end{equation}
for any sequence of vertex subsets $S_n\subset \mathcal{V}(C_n(\mathbf{a}))$ such that $\lim_{n\to \infty} |S_n|/n =p$.
\end{proposition}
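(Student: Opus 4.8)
The plan is to take $\mathcal{B}_n$ to be the discrete Fourier basis and to observe that every one of its vectors is perfectly flat, so that the measure it assigns to any vertex subset is exactly the relative size of that subset. First I would recall that, since $C_n(\mathbf{a}_n)$ is a Cayley graph of $\mathbb{Z}/n\mathbb{Z}$, the adjacency matrix $A(C_n(\mathbf{a}_n))$ is a circulant matrix and is therefore diagonalized by the unitary discrete Fourier transform, regardless of the connection set $\mathbf{a}_n$. Labelling the vertices by $\mathbb{Z}/n\mathbb{Z}$ and setting
\[
    \varphi_k^{(n)}(v) = \frac{1}{\sqrt{n}}\, \ue^{2\pi \ui k v / n}, \qquad v \in \mathcal{V}(C_n(\mathbf{a}_n)), \quad k \in \{0,1,\dots,n-1\},
\]
one checks directly that $A(C_n(\mathbf{a}_n))\,\varphi_k^{(n)} = \lambda_k^{(n)}\,\varphi_k^{(n)}$ with $\lambda_k^{(n)} = \sum_{a \in \mathbf{a}_n} \ue^{2\pi \ui k a/n}$, and that $\langle \varphi_k^{(n)}, \varphi_l^{(n)} \rangle = n^{-1} \sum_{v \in \mathbb{Z}/n\mathbb{Z}} \ue^{2\pi \ui (k-l) v/n} = \delta_{kl}$, so $\mathcal{B}_n := \{\varphi_k^{(n)}\}_{k=0}^{n-1}$ is an orthonormal eigenbasis of $A(C_n(\mathbf{a}_n))$.

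The point of this choice is that $|\varphi_k^{(n)}(v)|^2 = 1/n$ for every vertex $v$ and every index $k$. Hence, for any vertex subset $S \subset \mathcal{V}(C_n(\mathbf{a}_n))$,
\[
    \mu_{\varphi_k^{(n)}}(S) = \sum_{v \in S} |\varphi_k^{(n)}(v)|^2 = \frac{|S|}{n},
\]
independently of $k$. Thus for any sequence of eigenfunctions $\{\varphi_n\}_{n\in\mathcal{A}}$ with $\varphi_n \in \mathcal{B}_n$ and any sequence of vertex subsets with $|S_n|/n \to p$, one obtains $y_n = \mu_{\varphi_n}(S_n) = |S_n|/n \to p$, which is exactly the asserted conclusion.

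I do not expect a genuine obstacle in this argument: the identity $\mu_{\varphi_k^{(n)}}(S) = |S|/n$ is exact, requires no estimation, and is uniform in $k$, so it is immaterial which eigenfunction is drawn from $\mathcal{B}_n$ at each stage. The only subtlety worth flagging is that flatness is a feature of this specific basis and not of the eigenspaces themselves: whenever $A(C_n(\mathbf{a}_n))$ has an eigenvalue of multiplicity at least two, the corresponding eigenspace also contains unit vectors that are far from flat, so a different orthonormal eigenbasis need not equidistribute. For Proposition~\ref{prop: Circulant DQUE} we only need one good basis, and the Fourier basis supplies it; the failure of equidistribution for arbitrary real bases is a separate phenomenon treated later.
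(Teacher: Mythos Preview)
Your proof is correct and coincides with the paper's own argument: the paper also takes $\mathcal{B}_n$ to be the standard Fourier eigenbasis $\mathbf{v}_j = n^{-1/2}(1,\omega^j,\dots,\omega^{(n-1)j})$ of the circulant adjacency matrix and observes that $y_n = \sum_{i\in S_n}|[\mathbf{v}_j]_i|^2 = |S_n|/n$ exactly, which immediately gives the limit $p$. Your additional remark that flatness is specific to this basis and can fail for other eigenbases is precisely the point developed in the subsequent propositions.
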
  
We will see, in section \ref{sec DQUE}, that the statement follows directly from the formula for eigenvectors of circulant matrices which are perfectly equidistributed.  For comparison, in  \cite{BL??} the authors prove a general bound for elements of eigenvectors of the adjacency matrix of vertex transitive graphs, $|\psi (v) |\leq \sqrt{M}/n$ where $M$ is the maximum degeneracy of the eigenvalues, remark 3.1.
In section \ref{sec spectra}, we show that for sequences of $4$-regular circulant graphs of prime order the maximum multiplicity of eigenvalues of the adjacency matrix is two.

In \cite{MTZ23} the authors prove that the Cayley graphs of quasi-random groups admit real eigenfunction bases where all sequences of eigenfunctions and all sequences of vertex sets that are not too small exhibit DQUE.  They note that the proofs of their results simplify if one only looks for sequences of orthonormal bases, dropping the requirement that the bases be real.  In the case of circulant graphs we show that the sequence of bases cannot be made real without losing the DQUE property.  

\begin{theorem}\label{thm: circulant graph no real DQUE}
    For an increasing sequence of $4$-regular circulant graphs of prime order $\{C_n(a_1,a_2) \}_{n\in \mathcal{A}}$ there does not exist a sequence $\{\mathcal{B}_n\}_{n\in \mathcal{A}}$ with $\mathcal{B}_n$ a real orthonormal basis of eigenvectors of $A(C_n(a_1,a_2))$, where 
\begin{equation}
    \lim_{n\to \infty } y_n = p \ .
\end{equation}
for any sequence of vertex subsets $S_n\subset \mathcal{V}(C_n(a_1,a_2))$ such that $\lim_{n\to \infty} |S_n|/n =p$.
\end{theorem}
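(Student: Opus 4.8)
The plan is to convert the maximum-multiplicity-two property into a rigid description of the real eigenvectors, and then, for any candidate sequence of real orthonormal bases, to produce an adaptively chosen sequence of vertex sets on which DQUE fails. Recall that $A(C_n(a_1,a_2))$ is a circulant matrix, so its complex eigenvectors are the Fourier vectors $v_k$ with $v_k(j)=n^{-1/2}\ue^{2\pi\ui kj/n}$ and eigenvalues $\lambda_k=2\cos(2\pi ka_1/n)+2\cos(2\pi ka_2/n)$; since cosine is even, $\lambda_k=\lambda_{n-k}$. The key input is the spectral statement (established elsewhere in the paper) that for $n$ prime and sufficiently large every eigenvalue of $A(C_n(a_1,a_2))$ has multiplicity at most two. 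Combined with $\lambda_k=\lambda_{-k}$ and the fact that $n$ is odd, this forces $\lambda_0=4$ to be simple with eigenvector $n^{-1/2}\mathbf{1}$, and every other eigenspace to be exactly $\mathrm{span}_{\mathbb{C}}\{v_k,v_{-k}\}$ for a single pair $\pm k$. A short computation then shows that every real unit eigenvector of $A(C_n(a_1,a_2))$ is either $\pm n^{-1/2}\mathbf{1}$ or of the form
\begin{equation}
  \varphi(j)=\sqrt{\tfrac{2}{n}}\,\cos\!\Big(\tfrac{2\pi k j}{n}-\phi\Big),\qquad |\varphi(j)|^2=\tfrac{2}{n}\cos^2\!\Big(\tfrac{2\pi k j}{n}-\phi\Big),
\end{equation}
for some $k\in\{1,\dots,n-1\}$ and some phase $\phi$.

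Now suppose a sequence $\{\mathcal{B}_n\}_{n\in\mathcal{A}}$ of real orthonormal eigenbases with the claimed property existed. For each large $n\in\mathcal{A}$ choose any non-constant $\varphi_n\in\mathcal{B}_n$ (possible since $|\mathcal{B}_n|=n\ge 2$), with frequency $k_n$ and phase $\phi_n$ as above, and take the adaptively chosen vertex set of vertices carrying at least the uniform share of the mass,
\begin{equation}
  S_n=\Big\{\,j\in\{0,\dots,n-1\}\ :\ |\varphi_n(j)|^2\ge\tfrac1n\,\Big\}=\Big\{\,j\ :\ \cos^2\!\big(\tfrac{2\pi k_n j}{n}-\phi_n\big)\ge\tfrac12\,\Big\}.
\end{equation}
Because $n$ is prime and $1\le k_n\le n-1$, we have $\gcd(k_n,n)=1$, so $j\mapsto k_n j\bmod n$ is a bijection and the points $2\pi k_n j/n-\phi_n$, $j=0,\dots,n-1$, are exactly $n$ equally spaced points on the circle. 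Riemann-sum estimates, with an $O(1/n)$ error that is uniform in $(k_n,\phi_n)$ since the sampling points are equally spaced, then give $|S_n|/n\to\frac{1}{2\pi}\big|\{\psi:\cos^2\psi\ge\tfrac12\}\big|=\tfrac12$, so that $p=\tfrac12$, while on the other hand
\begin{equation}
  y_n=\mu_{\varphi_n}(S_n)=\sum_{j\in S_n}\frac{2}{n}\cos^2\!\Big(\tfrac{2\pi k_n j}{n}-\phi_n\Big)\ \longrightarrow\ \frac{1}{\pi}\int_{\{\cos^2\psi\,\ge\,1/2\}}\!\cos^2\psi\,\ud\psi=\frac12+\frac1\pi .
\end{equation}
Since $\tfrac12+\tfrac1\pi\ne\tfrac12=p$, this contradicts DQUE, and therefore no such sequence of real bases exists.

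The main obstacle is the spectral input, the bound on the eigenvalue multiplicities of $A(C_n(a_1,a_2))$; everything downstream of it is the short equidistribution computation above. Its role is essential: without it a real eigenvector could be a superposition of Fourier vectors over many distinct frequencies sharing a common eigenvalue, and such a combination need not have the rigid $\cos^2$ profile — a sum of sufficiently many incommensurate cosine-squares can itself be close to equidistributed, which would defeat the choice of $S_n$. Establishing the multiplicity bound reduces to controlling the coincidences $\cos(2\pi ka_1/n)+\cos(2\pi ka_2/n)=\cos(2\pi k'a_1/n)+\cos(2\pi k'a_2/n)$ among the eigenvalues for $k'\ne\pm k$, an elementary but delicate question over the cyclotomic field $\mathbb{Q}(\ue^{2\pi\ui/n})$; one should also record that $\lambda_0=4$ is not attained by any $\lambda_k$ with $k\ne 0$ (so that it is genuinely simple), which for $n$ prime is immediate.
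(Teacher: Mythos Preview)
Your proof is correct and shares the paper's overall architecture: use the multiplicity bound (Proposition on spectral multiplicity) to force every real unit eigenvector into the rigid form $\sqrt{2/n}\cos(2\pi kj/n-\phi)$, then exploit the non-uniformity of $\cos^2$ via an adaptively chosen $S_n$. The difference lies in the final step. The paper selects the specific basis vector $\mathbf{s}'_{(n-1)/2}$, takes $S_n$ to be a contiguous block of vertices centred on the zero of the corresponding $\sin^2(\pi k/n+\theta)$, and uses the elementary inequality $\sin^2 x<x^2$ to obtain $\limsup y_n\le \pi^2 p^3/6<p$ for every $p<\sqrt{6}/\pi$. You instead pick an arbitrary non-constant eigenvector, take the upper level set $S_n=\{j:|\varphi_n(j)|^2\ge 1/n\}$, and use equidistribution of the $n$ equally spaced sample points (valid since $n$ is prime, so $\gcd(k_n,n)=1$) to get the exact limits $|S_n|/n\to\tfrac12$ and $y_n\to\tfrac12+\tfrac1\pi$. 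Your route is more conceptual and works uniformly for every non-constant real eigenvector, at the cost of fixing $p=\tfrac12$; the paper's is slightly more elementary and produces a one-parameter family of counterexamples in $p$. Both rest on the same spectral input, and your closing remark that this input is the genuine crux---without it a real eigenvector could mix several Fourier frequencies and the $\cos^2$ profile would be lost---is exactly right.
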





The article is laid out in the following manner.  In section \ref{sec circulant} we introduce circulant graphs.  In section \ref{sec spectra} we prove some required bounds on the multiplicity of the eigenvalues of $4$-regular circulant graphs, and in section \ref{sec DQUE} we investigate the DQUE property on circulant graphs.

\section{Circulant Graphs}\label{sec circulant}

A \emph{graph} $G$ is a set of \emph{vertices} $V(G) = \{0, 1, \dots, n-1\}$ and a set of \emph{edges} $E(G)$ consisting of unordered pairs of vertices, see, for example, Fig. \ref{fig: circulant graph 13}. Let $u,v\in V(G)$, if $\{u, v\} \in E(G)$ we say $u$ and $v$ are adjacent, written $u \sim v$.  The \emph{degree} of a vertex $v$ is the number of vertices adjacent to $v$ and a graph is $k$-\emph{regular} if the degree of every vertex is $k$.  

A \emph{simple graph} is a graph with no loops or multiple edges; so a vertex is not adjacent to itself and there is at most one edge between a pair of vertices.  Any graph can be made into a simple graph by introducing a dummy vertex of degree two on each loop and on each multiple edge. 
A graph is \emph{connected} if for every pair of vertices $u,v\in V(G)$ there is a sequence of vertices $u=p_1, p_2, \dots, p_{l+1}=v$ with $\{p_j,p_{j+1}\}\in E(G)$ for $j=1,\dots, l$.  

For the purposes of this paper, we will be interested in the class of simple circulant graphs which are Cayley graphs of cyclic groups.
A \emph{circulant graph} $C_n(a_1, a_2, \dots, a_m)$, where $1 \leq a_1 <\dots <a_m  \leq n/2$, is a graph with $V = \{0, 1, \dots, n-1\}$, where $i \sim j$ for $i, j \in V$ if $|i-j| = \pm a_k \,( \mathrm{mod }\, n)$ for some $k\in \{1 ,\dots , m\}$. We will also write $C_n(\mathbf{a})$ where $\mathbf{a} = (a_1, a_2, \dots, a_m)$.
%
Fig. \ref{fig: circulant graph 13} shows the $4$-regular circulant graph $C_{13}(1, 3)$. Note that the graph is connected; a circulant graph is connected if and only if $\mathrm{gcd}(n, a_1, a_2, \dots, a_m) = 1$.

For a simple graph we can define an adjacency matrix which records the connections between the vertices.
%
If $G$ is a simple graph with $n$ vertices, the $n\times n$ \emph{adjacency matrix} $A(G)$ is a matrix where $A_{ij} = 1$ if $i\sim j$ and $A_{ij} = 0$ otherwise.
%
For the circulant graph $C_{13}(2, 3)$, shown in Fig. \ref{fig: circulant graph 13}, the adjacency matrix is
\begin{equation}\label{eq:adjmatrix13}
    A( C_{13}(2, 3)) = 
    \left(
\begin{array}{ccccccccccccc}
        0 & 0 & 1 & 1 & 0 & 0 & 0 & 0 & 0 & 0 & 1 & 1 & 0\\
        0 & 0 & 0 & 1 & 1 & 0 & 0 & 0 & 0 & 0 & 0 & 1 & 1\\
        1 & 0 & 0 & 0 & 1 & 1 & 0 & 0 & 0 & 0 & 0 & 0 & 1\\
        1 & 1 & 0 & 0 & 0 & 1 & 1 & 0 & 0 & 0 & 0 & 0 & 0\\
        0 & 1 & 1 & 0 & 0 & 0 & 1 & 1 & 0 & 0 & 0 & 0 & 0\\
        0 & 0 & 1 & 1 & 0 & 0 & 0 & 1 & 1 & 0 & 0 & 0 & 0\\
        0 & 0 & 0 & 1 & 1 & 0 & 0 & 0 & 1 & 1 & 0 & 0 & 0\\
        0 & 0 & 0 & 0 & 1 & 1 & 0 & 0 & 0 & 1 & 1 & 0 & 0\\
        0 & 0 & 0 & 0 & 0 & 1 & 1 & 0 & 0 & 0 & 1 & 1 & 0\\
        0 & 0 & 0 & 0 & 0 & 0 & 1 & 1 & 0 & 0 & 0 & 1 & 1\\
        1 & 0 & 0 & 0 & 0 & 0 & 0 & 1 & 1 & 0 & 0 & 0 & 1\\
        1 & 1 & 0 & 0 & 0 & 0 & 0 & 0 & 1 & 1 & 0 & 0 & 0\\
        0 & 1 & 1 & 0 & 0 & 0 & 0 & 0 & 0 & 1 & 1 & 0 & 0
\end{array}
\right) . 
\end{equation}
Notice that the adjacency matrix is symmetric, and the adjacency matrix of a circulant graph is a circulant matrix; each row of the matrix is obtained by rotating the elements of the previous row one place to the right. 

The discrete Laplacian of a graph is $L=D-A$ where $A$ is the adjacency matrix and $D$ is a diagonal matrix of the vertex degrees \cite{SGT}.  Consequently, for a $2m$-regular circulant graph $L=2m \UI_n -A$ is also a circulant matrix.   An eigenvector of $A$ with eigenvalue $\lambda$ is also eigenvector of the Laplacian with eigenvalue $2m-\lambda$.  Consequently the results of sections \ref{sec spectra} and \ref{sec DQUE} on the multiplicity of the spectrum and equidistribution of eigenvectors apply equally to eigenvectors and eigenvalues of $L$.  

\section{Spectra of 4-Regular Circulant Graphs}
\label{sec spectra}

We will be interested in the distribution of elements of eigenvectors of the adjacency matrix.  
As the adjacency matrix is a circulant matrix the eigenvalues are  
\begin{equation}
    \lambda_j = \omega^{a_1j} + \omega^{-a_1j} + \omega^{a_2j} + \omega^{-a_2j} + ... + \omega^{a_mj} + \omega^{-a_dj},
\end{equation}
with $\omega=\ue^{\frac{2\pi \ui}{n} }$ for $j=0,\dots , n-1$.  So for a connected graph where $\mathrm{gcd}(n,\mathbf{a})=1$ we have a simple eigenvalue $\lambda_0=2m$. Then as $\lambda_j=\lambda_{n-j}$ all the eigenvalues except $\lambda_0$ (and $\lambda_{n/2}$ if $n$ is even) have a multiplicity of at least two.   

Notice also, that for $\mathbf{a}$ fixed,
\begin{equation}
    \lambda_j = 2 \sum_{l=1}^m \cos \left( \frac{2\pi a_l j}{n}\right)
\end{equation}
so there are eigenvalues $\lambda_j$ with $j\neq 0$ that are arbitrarily close to $\lambda_0=2m$ for $n$ large.  Hence, families of circulant graphs do not have the expander property.

To determine the maximum multiplicity of the eigenvalues of the adjacency matrix we consider $4$-regular circulant graphs $C_n(a_1,a_2)$ with $n$ prime.
\begin{proposition}\label{prop: spec multiplicity}
   For $n$ prime, if $n \nmid a_1^2+a_2^2$ then $A(C_n(a_1,a_2))$ has a simple eigenvalue $4$ and all the other eigenvalues have multiplicity two.  If $n \mid a_1^2+a_2^2$ then $\lambda_j = \lambda_k$ iff $n\mid j+k$ or $n \mid j^2 + k^2$. 
\end{proposition}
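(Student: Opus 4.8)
The plan is to turn the question ``when does $\lambda_j=\lambda_k$?'' into a computation inside the group $(\mathbb Z/n)^\times$. Since $A=A(C_n(a_1,a_2))$ is circulant it is diagonalised by the discrete Fourier matrix, so its eigenvalues counted with multiplicity are exactly $\lambda_0,\dots,\lambda_{n-1}$ and the multiplicity of a value $v$ is $\#\{j\in\mathbb Z/n:\lambda_j=v\}$. Because $n$ is prime, the minimal polynomial of $\omega=\ue^{2\pi\ui/n}$ is $1+x+\dots+x^{n-1}$, so $\{\omega,\dots,\omega^{n-1}\}$ is a $\mathbb Q$-basis of $\mathbb Q(\omega)$. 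For $j\not\equiv 0$ the exponents $a_1j,-a_1j,a_2j,-a_2j$ are nonzero mod $n$ (the $a_i$ and $j$ are units) and pairwise distinct: a coincidence would force $2a_1\equiv 0$, $2a_2\equiv 0$, or $a_1\equiv\pm a_2\pmod n$, each impossible since $n$ is odd and $0<a_1<a_2\le n/2$ give $0<a_2\pm a_1<n$. Hence $\lambda_j$ is a sum of four distinct basis vectors, and comparing coefficients gives, for $j,k\not\equiv 0$,
\begin{equation}
  \lambda_j=\lambda_k\ \Longleftrightarrow\ \{\pm a_1j,\pm a_2j\}=\{\pm a_1k,\pm a_2k\}\pmod n ;
\end{equation}
multiplying every exponent by $k^{-1}$, this holds iff $jk^{-1}\in H$, where $H=\{c\in(\mathbb Z/n)^\times:c\{\pm a_1,\pm a_2\}=\{\pm a_1,\pm a_2\}\}$ is the stabiliser of the $4$-element set $\{\pm a_1,\pm a_2\}$ under multiplication, a subgroup of $(\mathbb Z/n)^\times$ containing $-1$.

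Next I would compute $H$. An element $c\in H$ is pinned down by the value $ca_1\in\{\pm a_1,\pm a_2\}$; running through the four possibilities and then also demanding $ca_2\in\{\pm a_1,\pm a_2\}$, and using $a_1^2\not\equiv a_2^2$ and $a_1\pm a_2\not\equiv 0\pmod n$, one finds $c=\pm 1$ always works and the only further candidates are $c=\pm a_2a_1^{-1}$, which lie in $H$ exactly when $a_2^2\equiv -a_1^2$, i.e.\ when $n\mid a_1^2+a_2^2$. Thus $H=\{\pm 1\}$ if $n\nmid a_1^2+a_2^2$, whereas if $n\mid a_1^2+a_2^2$ then $c_0:=a_2a_1^{-1}$ satisfies $c_0^2\equiv -1$ and $H=\{1,-1,c_0,-c_0\}$ is cyclic of order $4$.

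It then remains to translate back. A sum of four complex numbers of modulus one equals $4$ only when each equals $1$, so $\lambda_j=4$ forces $j\equiv 0$ and $\lambda_0=4$ is simple. If $n\nmid a_1^2+a_2^2$, then for $j\not\equiv 0$ we get $\lambda_j=\lambda_k\iff k\equiv\pm j$, so every eigenvalue other than $4$ has multiplicity exactly $2$. If $n\mid a_1^2+a_2^2$, then for units $j,k$ we have $\lambda_j=\lambda_k\iff jk^{-1}\in\{\pm 1,\pm c_0\}$; here $jk^{-1}=\pm 1\iff n\mid j\mp k$ and $jk^{-1}=\pm c_0\iff (jk^{-1})^2\equiv -1\iff n\mid j^2+k^2$ (the last converse because $\pm c_0$ are the only square roots of $-1$ modulo the prime $n$), which is the stated alternative $n\mid j+k$ or $n\mid j^2+k^2$ once the trivial case $j\equiv k$ is set aside and one checks directly that these divisibility conditions also correctly describe the cases $j\equiv 0$ or $k\equiv 0$.

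The only real obstacle I anticipate is the linearisation step: one must rule out accidental cancellations among the powers of $\omega$ beyond the single relation $1+\omega+\dots+\omega^{n-1}=0$, and this is precisely where primality of $n$ and the normalisation $a_2\le n/2$ enter. Once $\lambda_j=\lambda_k$ has been converted into the set equality, the remainder is a short finite computation in $(\mathbb Z/n)^\times$ together with elementary bookkeeping for the eigenvalue $4$.
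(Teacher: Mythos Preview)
Your proof is correct and shares the paper's crucial first step: using that $\{\omega,\omega^2,\dots,\omega^{n-1}\}$ is linearly independent over $\mathbb Q$ to convert $\lambda_j=\lambda_k$ into the equality of exponent sets $\{\pm a_1j,\pm a_2j\}=\{\pm a_1k,\pm a_2k\}\pmod n$. You are in fact more careful here than the paper, explicitly verifying that the four exponents on each side are pairwise distinct, which is needed for the coefficient comparison to yield a set equality rather than a multiset relation.

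Where you diverge is in the analysis of this set equality. The paper writes the two quadruples as ordered vectors $\mathbf J,\mathbf K$ and searches through the permutations $\sigma\in S_4$ without fixed points for which $\mathbf J=\sigma(\mathbf K)$ is possible, handling the three double transpositions and then the six $4$-cycles one by one. Your approach instead multiplies through by $k^{-1}$ to recognise the condition as $jk^{-1}\in H$, where $H\le(\mathbb Z/n)^\times$ is the stabiliser of the set $\{\pm a_1,\pm a_2\}$; computing $H$ then reduces to the four possible images of $a_1$, a shorter case analysis. This buys you a cleaner structural statement (the nonzero eigenvalue classes are the cosets of $H$, so multiplicities are $|H|=2$ or $4$), and the identification $c_0^2\equiv -1$ makes the translation to the divisibility criterion $n\mid j^2+k^2$ immediate. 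The paper's permutation bookkeeping reaches the same conclusion but less transparently. Your remark that the trivial case $j\equiv k$ must be set aside is also a valid observation about the proposition's wording.
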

\begin{proof}
    Suppose $\lambda_j = \lambda_k$ for $j \neq k$. As $n$ is prime, $\{\omega^{a_1j}, \omega^{a_2j}, \omega^{-a_1j}, \omega^{-a_2j}\}$ and $\{\omega^{a_1k}, \omega^{a_2k}, \omega^{-a_1k}, \omega^{-a_2k}\}$ are two sets of primitive roots of unity. The cyclotomic field $\mathbb{Q}(\omega)$ is linearly independent over the rationals~\cite{Washington}. Hence, if $\lambda_j = \lambda_k$, we require $\{\omega^{a_1j}, \omega^{a_2j}, \omega^{-a_1j}, \omega^{-a_2j}\} = \{\omega^{a_1k}, \omega^{a_2k}, \omega^{-a_1k}, \omega^{-a_2k}\}$. Define two vectors,
    \begin{align}
    \mathbf{J} &= (\omega^{a_1j}, \omega^{a_2j}, \omega^{-a_1j}, \omega^{-a_2j}) \\
    \mathbf{K} &= (\omega^{a_1k}, \omega^{a_2k}, \omega^{-a_1k}, \omega^{-a_2k}) \ .    \end{align}
    If the two sets of primitive roots of unity are equal then $\mathbf{J} = \sigma(\mathbf{K})$   for some permutation $\sigma \in S_4$, where $\sigma(\mathbf{K})=(K_{\sigma(1)},\dots, K_{\sigma(4)})$.
     If $\sigma$ contains a $1$-cycle then $\mathbf{J} \neq \sigma(\mathbf{K})$ as then $\omega^{aj} = \omega^{ak}$ for some $a \in \{a_1, a_2, -a_1, -a_2\}$ and hence $\omega^{a(j-k)} = 1$ which is a contradiction as $\omega^a$ is a  primitive root of unity.    
There are nine permutations in $S_4$ that do not contain $1$-cycles, 
$(1 2) (3 4)$, $(1 3) (2 4)$, $(1 4) (2 3)$, $(1 2 3 4)$, $(1 2 4 3)$, $(1 3 2 4)$, $(1 3 4 2)$, $(1 4 2 3)$, and $(1 4 3 2)$.

 For $\sigma = (1 2) (3 4)$, if $\mathbf{J} = \sigma(\mathbf{K})$ then,
    \begin{align}
        \omega^{a_1j} &= \omega^{a_2k} \\
        \omega^{a_2j} &= \omega^{a_1k} \ .
    \end{align}
    Hence $\omega^{(a_1 + a_2)j} = \omega^{(a_1 + a_2)k}$, and $\omega^{(a_1 + a_2)(j-k)} = 1$. Therefore $j-k = 0\ (\mathrm{ mod }\ n)$ so $j = k$, which is a contradiction.

Similarly, for $\sigma = (1 4) (2 3)$, if $\mathbf{J} = \sigma(\mathbf{K})$, we have,
    \begin{align}
        \omega^{a_1j} &= \omega^{-a_2k}\\
        \omega^{-a_2j} &= \omega^{a_1k} \ .
    \end{align}
    This implies $\omega^{(a_1-a_2)(j-k)} = 1$, and $j - k = 0 \, (\mathrm{ mod }\, n)$ producing a contradiction.    
    
 For $\sigma = (1 3) (2 4)$, if $\mathbf{J} = \sigma(\mathbf{K})$, we have,
    \begin{align}
        \omega^{a_1j} &= \omega^{-a_1k} \\
        \omega^{a_2j} &= \omega^{-a_2k} \ .    
        \end{align}
    This implies $\omega^{a_i(j+k)} = 1$, so $j+k = 0\ (\mathrm{ mod }\ n)$ and wlog $k=n-j$, which we already noted produces degeneracy in the spectrum.  

  Now we will look at the $4$-cycles. Note that, if $\sigma(1)=2$ then $\sigma(3)=4$, as $\omega^{a_2j} = \omega^{-a_2k}$ implies $\omega^{-a_2j} = \omega^{a_2k}$. Similarly, if $\sigma(1)=3$, then $\sigma(3)=1$ and if $\sigma(1)=4$ then $\sigma(3)=2$.  This rules out all the $4$-cycles except $(1 2 3 4)$ and $(1 4 3 2)$. 

Consider $\sigma = (1 2 3 4)$. 
If $\mathbf{J} = \sigma(\mathbf{K})$, we have,
    \begin{align}
        \omega^{a_1j} &= \omega^{a_2k} \label{eq:1234 1}\\ 
        \omega^{a_2j} &= \omega^{-a_1k} \label{eq:1234 2} \\ 
        \omega^{-a_1j} &= \omega^{-a_2k} \label{eq:1234 3}\\ 
        \omega^{-a_2j} &= \omega^{a_1k} \ . \label{eq:1234 4}
    \end{align}
    Multiplying (\ref{eq:1234 2}) and (\ref{eq:1234 4}), we get  $\omega^{a_1k + a_2j} = \omega^{-(a_1k + a_2j)}$. Since $\omega^2$ is a primitive root of unity, $a_1k + a_2j = 0 \, (\mathrm{ mod }\ n)$.
Similarly, multiplying (\ref{eq:1234 1}) and (\ref{eq:1234 3}) gives us $\omega^{-a_1j+a_2k} = \omega^{-(-a_1j+a_2k)}$. So $a_1j = a_2k \, (\mathrm{ mod }\ n)$.  
Hence $0 = a_1^2k + a_1a_2j = (a_1^2+a_2^2)k \, (\mathrm{ mod }\ n)$. So $a_1^2 + a_2^2 = 0\, (\mathrm{ mod }\ n)$. Note that, equations (\ref{eq:1234 1}) through (\ref{eq:1234 4}) also imply $j^2 + k^2 = 0\, (\mathrm{mod}\ n)$. A similar argument produces the same conclusions if $\sigma = (1 4 3 2)$.    

    Suppose $a_1^2 + a_2^2 = 0\, (\mathrm{mod}\ n)$ and $j^2 + k^2 = 0\, (\mathrm{mod}\ n)$.  Then $\omega^{a_1j} = \omega^{(a_1^2j^2)^{1/2}} = \omega^{((-a_2^2)(-k^2))^{1/2}}=\omega^{\pm a_2k}  $. Similarly, $\omega^{a_2j}=\omega^{\pm a_1k}$. Hence $\{\omega^{a_1j}, \omega^{a_2j}, \omega^{-a_1j}, \omega^{-a_2j}\} = \{\omega^{a_1k}, \omega^{a_2k}, \omega^{-a_1k}, \omega^{-a_2k}\}$ and $\lambda_j  = \lambda_k$.
\end{proof}

For any particular choice of $(a_1, a_2)$, there are only finitely many primes $n$ such that $a_1^2 + a_2^2=0 \, (\mathrm{mod}\ n)$.  This implies we can always choose a sequence of graphs of increasing size with $(a_1, a_2)$ fixed for which the eigenvalues have a maximum multiplicity of two.

To illustrate proposition \ref{prop: spec multiplicity} for a graph where there are eigenvalues of the adjacency matrix with multiplicity greater than two, consider again $C_{13}(2, 3)$, shown in Fig. \ref{fig: circulant graph 13}, where $13|(2^2 + 3^2)$.  The adjacency matrix of this graph, see (\ref{eq:adjmatrix13}), 
has a simple eigenvalue $\lambda_0 = 4$ and three eigenvalues with multiplicity four,
\begin{align}
    \lambda_1 &= \lambda_5 = \lambda_8 = \lambda_{12} =2\sin \pi/26 +2\sin 5\pi /26 \ ,\\
    \lambda_2 &= \lambda_3 = \lambda_{10} = \lambda_{11} =-2\cos \pi/13 -2\sin 3\pi/26 \ ,\\
    \lambda_4 &= \lambda_6 = \lambda_7 = \lambda_9 
    =2\cos 2\pi/13 -2\cos 3\pi/13 \ .
\end{align}
To see this, note that $1 + 12 = 13$, so $\lambda_1 = \lambda_{12}$. Since $1^2 + 5^2 = 2\times 13$, we also have $\lambda_1 = \lambda_5$. Similarly, $1^2 + 8^2 = 5\times 13$, so $\lambda_1 = \lambda_8$. Note also that $8 + 5 = 13$, that $5^2 + 12^2 = 13\times 13$, and that $8^2 + 12^2 = 16\times 13$, hence the eigenvalues $\lambda_1$, $\lambda_5$, $\lambda_8$ and $\lambda_{12}$ are all equal. The equalities among the other eigenvalues are obtained similarly.


\section{Discrete Quantum Unique Ergodicity}\label{sec DQUE}


The standard formula for eigenvectors of an $n\times n$ circulant matrix is,
   \begin{equation}
        \mathbf{v}_j = \frac{1}{\sqrt{n}} (1, \omega^j, \omega^{2j}, \dots, \omega^{(n-1)j})
    \end{equation}
    with $j = 0, 1, ..., n-1$.  Hence $\mathcal{B}_j=\{\mathbf{v}_0,\dots, \mathbf{v}_{n-1}\}$ is an orthonormal basis of $A(C_n (\mathbf{a}))$.  Consequently, for $S_n \subset V_n$ we see,
    \begin{equation}
         y_n = \sum_{i \in S_n} |[\mathbf{v}_j]_i|^2 = \frac{|S_n|}{n},
    \end{equation}
which proves proposition \ref{prop: Circulant DQUE}.

We will now consider the DQUE property with the additional requirement that the orthonormal bases of eigenvectors of the adjacency matrix are real.
The eigenvalue $\lambda_0=2m$ has a real eigenvector $\mathbf{v}_0 = \frac{1}{\sqrt{n}}(1, \dots, 1)$.  For $a_m<n/2$ the other eigenvalues all have a multiplicity of at least two as $\lambda_j=\lambda_{n-j}$ 
for $j = 1, 2, \dots, n-1$. 
We can define a real orthonormal basis of eigenvectors of the adjacency matrix. For $j = 1, 2, \dots, (n-1)/2$ let,
\begin{align}
    \mathbf{c}_j &= \frac{1}{\sqrt{2}}(\mathbf{v}_j + \mathbf{v}_{n-j}) \ , \\
    \mathbf{s}_j &= \frac{1}{\sqrt{2}i}(\mathbf{v}_j - \mathbf{v}_{n-j}) \ .
\end{align}
Notice, 
\begin{align}
    \mathbf{c}_j &= \sqrt{\frac{2}{n}}\left(1, \cos\left(\frac{2\pi j}{n}\right), \cos\left(\frac{4\pi j}{n}\right), \dots , \cos\left(\frac{2(n-1)\pi j}{n}\right)\right) \ , \\
    \mathbf{s}_j &= \sqrt{\frac{2}{n}}\left(0, \sin\left(\frac{2\pi j}{n}\right), \sin\left(\frac{4\pi j}{n}\right), \dots, \sin\left(\frac{2(n-1)\pi j}{n}\right)\right) \ .
\end{align}
The set of eigenvectors $ \left\{\mathbf{v}_0, \mathbf{c}_1, \mathbf{s}_1, \dots, \mathbf{c}_{\frac{n-1}{2}}, \mathbf{s}_{\frac{n-1}{2}}\right\}$ is a real orthonormal basis for the eigenspace of $A(C_n(\mathbf{a}))$.

\begin{proposition}\label{lem: NDQUE}
Let $\{C_n(\mathbf{a}) \}_{n\in \mathcal{A}}$ be a sequence of $2d$-regular circulant graphs and 
consider the real orthonormal bases of eigenvectors of $A(C_n(\mathbf{a}))$, 
    \begin{equation}
        \mathcal{B}_n = \left\{\mathbf{v}_0, \mathbf{c}_1, \mathbf{s}_1, \dots, \mathbf{c}_{\frac{n-1}{2}}, \mathbf{s}_{\frac{n-1}{2}}\right\}.
    \end{equation}
    There exists vertex subsets $\{S_n\}$ and sequences of eigenvectors in $\mathcal{B}_n$ which do not exhibit  
    DQUE.
\end{proposition}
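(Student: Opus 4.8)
The plan is to show that equidistribution already fails for a single $n$-independent eigenvector, the lowest cosine mode $\mathbf{c}_1\in\mathcal{B}_n$. Its mass density is $|[\mathbf{c}_1]_k|^2=\tfrac{2}{n}\cos^2(2\pi k/n)$, which on the rescaled vertex set mimics the \emph{non-constant} function $2\cos^2(2\pi x)=1+\cos(4\pi x)$ on $[0,1)$ rather than the constant $1$. The whole point is then to choose a vertex set $S_n$ tuned to this density so that $y_n$ converges to a value different from $p$.

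Concretely, I would first fix a proportion $p\in(0,1)$ and select a set $T\subset[0,1)$ that is a finite union of intervals with Lebesgue measure $|T|=p$ but $\int_T\cos(4\pi x)\,\ud x\neq 0$. Since $\cos(4\pi x)$ is non-constant with mean zero, such $T$ always exists: for $p\neq\tfrac{1}{2}$ one may take $T=[0,p/2)\cup(1-p/2,1)$; for $p=\tfrac{1}{2}$ one may take $T=(\tfrac{1}{8},\tfrac{3}{8})\cup(\tfrac{5}{8},\tfrac{7}{8})$; and in general the sublevel set $\{x\in[0,1):\cos(4\pi x)<c\}$, with $c$ chosen so that its measure equals $p$, works (it is a union of at most two intervals, and $\int_T\cos(4\pi x)\,\ud x<0$ there). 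Then I set $S_n=\{k\in\{0,1,\dots,n-1\}:k/n\in T\}$, a genuine vertex subset of $C_n(\mathbf{a})$ with $|S_n|/n\to|T|=p$.

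The remaining step is a Riemann-sum estimate: because $T$ is a finite union of intervals, the number of $k$ with $k/n$ in a given subinterval $[\alpha,\beta)$ is $(\beta-\alpha)n+O(1)$, and the summand $\cos^2$ is bounded, so, as $n\to\infty$,
\begin{align}
  y_n=\mu_{\mathbf{c}_1}(S_n)=\frac{2}{n}\sum_{k\in S_n}\cos^2\!\Big(\frac{2\pi k}{n}\Big)
  &\longrightarrow \int_T\big(1+\cos(4\pi x)\big)\,\ud x \notag\\
  &= p+\int_T\cos(4\pi x)\,\ud x\ \neq\ p .
\end{align}
Hence the constant sequence $\{\mathbf{c}_1\}$ together with $\{S_n\}$ violates the DQUE criterion $y_n\to p$, which is precisely the assertion of the proposition. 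The same argument runs verbatim with $\mathbf{s}_1$ and $\sin^2$ in place of $\mathbf{c}_1$ and $\cos^2$, or with any fixed $\mathbf{c}_j$.

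There is no serious obstacle here; the only points requiring care are (i) verifying that a finite union of intervals $T$ with $|T|=p$ and $\int_T\cos(4\pi x)\,\ud x\neq 0$ can be produced for \emph{every} $p\in(0,1)$, including the borderline value $p=\tfrac{1}{2}$, and (ii) making the Riemann-sum convergence explicit, i.e. controlling the $O(1/n)$ discretization error on each of the finitely many fixed intervals comprising $T$ together with the boundary vertices, which is routine.
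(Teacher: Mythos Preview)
Your argument is correct, but it follows a genuinely different route from the paper's. The paper does not use the fixed low-frequency mode $\mathbf{c}_1$ and a Riemann-sum limit; instead it takes the sequence $\{\mathbf{s}_{(n-1)/2}\}$, for which the identity $\sin\!\big(\tfrac{\pi k(n-1)}{n}\big)=\pm\sin\!\big(\tfrac{\pi k}{n}\big)$ gives $|[\mathbf{s}_{(n-1)/2}]_k|^2=\tfrac{2}{n}\sin^2(\pi k/n)$, and then picks $S_n=\{n-m,\dots,n-1,0,1,\dots,m\}$ with $m=\lfloor np/2\rfloor$. The elementary bound $\sin^2 x<x^2$ then yields directly
\[
y_n<\frac{4\pi^2}{n^3}\sum_{k=1}^m k^2\ \Longrightarrow\ \limsup_{n\to\infty} y_n\le \frac{\pi^2 p^3}{6}<p
\]
for all $p<\sqrt{6}/\pi$.

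The trade-offs are these. Your Riemann-sum approach is more conceptual---it makes explicit that DQUE fails precisely because the limiting density $1+\cos(4\pi x)$ is non-constant---and it produces counterexamples for \emph{every} $p\in(0,1)$, whereas the paper's construction is restricted to $p<\sqrt{6}/\pi\approx 0.78$. On the other hand, the paper's argument is completely elementary (no convergence machinery, just a finite-sum inequality), gives an explicit quantitative gap $\pi^2 p^3/6$ versus $p$, and---more importantly for the paper's purposes---is tailored to extend to the rotated bases $\mathcal{B}_n(\bm{\theta}_n)$ in the subsequent proposition: because the angle $\theta_{(n-1)/2}$ may vary with $n$, the paper simply translates $S_n$ by $\lfloor n\theta_{(n-1)/2}/\pi\rfloor$ and the same inequality goes through verbatim. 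Your fixed-$T$ Riemann-sum argument would need an analogous $n$-dependent shift of $T$ to handle that case.
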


\begin{proof}
Let $0<p < \sqrt{6}/\pi \approx 0.780$.  For each $n\in \mathcal{A}$ consider the subset of vertices 
$S_n = \{n-m, n-m+1,\dots, n-1, 0, 1, 2, \dots, m\}$, where $m = \lfloor np/2\rfloor$, hence $\lim_{n \rightarrow \infty}|S_n|/n = p$.  If we use the sequence of basis vectors $\{ \mathbf{s}_{(n-1)/2} \}_{n\in \mathcal{A}}$ to define a sequence of probability measures on $\mathcal{V}$ then the measure of $S_n$ is,
    \begin{equation}
        y_n = \sum_{k \in S_n} \left|[\mathbf{s}_{(n-1)/2}]_k
        \right|^2 \\ 
        = \frac{4}{n} \sum_{k=1}^m \sin^2\left(\frac{\pi k}{n}\right) \ .  
        \end{equation}
    Since $|\sin x |< |x|$ for $x\neq 0$, we have
    \begin{equation}
        y_n < \frac{4\pi^2}{n^3} \sum_{k=0}^m k^2  \\
        \leq \frac{\pi^2 (np)(np+1)(np+2)}{6 n^3} \ .
    \end{equation}
    So $\lim_{n \rightarrow \infty}y_n \leq \pi^2p^3 / 6 < p$.  Hence there exist subsets of vertices where $\lim_{n \rightarrow \infty}y_n \neq \lim_{n \rightarrow \infty}|S_n|/n = p$.
\end{proof}
The proof of proposition \ref{lem: NDQUE} shows that for the bases $\{ \mathcal{B}_n \}_{n\in \mathcal{A}}$ there exists a sequence of eigenfunctions that does not equidistribute, even over subsets of vertices containing more than three quarters of each of the graphs.  

If eigenvalues of  $A(C_n(\mathbf{a}))$ have a maximum multiplicity of two 
we can define a general real orthonormal eigenvector basis.  Let $\bm{\theta}_n=(\theta_1,\dots, \theta_{(n-1)/2})$ with $0 \leq \theta_j \leq \pi/2$. Then let,
\begin{align}
    \mathbf{c}_j' &= \cos \theta_j \, \mathbf{c}_j + \sin \theta_j  \, \mathbf{s}_j \ , \\
    \mathbf{s}_j' &= -\sin \theta_j \, \mathbf{c}_j + \cos \theta_j  \,\mathbf{s}_j \ .
\end{align}
The set
\begin{equation}
    \mathcal{B}_n (\bm{\theta}_n)
    =
    \left\{\mathbf{v}_0, \mathbf{c}_1', \mathbf{s}_1', \dots, \mathbf{c}_{\frac{n-1}{2}}', \mathbf{s}_{\frac{n-1}{2}}' \right\}
\end{equation}
is a real orthonormal basis of eigenvectors of $A(C_n(\mathbf{a}))$. 


\begin{proposition}
  If the eigenvalues of $A(C_n(a_1, a_2))$ have maximum multiplicity two then every real orthonormal basis of $A(C_n(a_1, a_2))$ has the form 
    $\mathcal{B}_n (\bm{\theta}_n)$
    for some $\bm{\theta}_n \in [0, \pi/2]^{(n-1)/2}$.
\end{proposition}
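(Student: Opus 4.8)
The plan is to determine every real orthonormal eigenbasis one eigenspace at a time, exploiting the fact that under the maximal–multiplicity–two hypothesis the eigenspaces of $A(C_n(a_1,a_2))$ are completely forced.

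First I would record the spectral structure. Recall $n$ is odd, so $\lambda_0=4$ is a simple eigenvalue with eigenspace $\mathbb{R}\mathbf{v}_0$, and for $1\le j\le (n-1)/2$ the pairs $\{j,n-j\}$ are disjoint and exhaust $\{1,\dots,n-1\}$. If two of $\lambda_0,\lambda_1,\dots,\lambda_{(n-1)/2}$ coincided, or if $\lambda_j=\lambda_0$ for some $j\ge 1$, then the relation $\lambda_j=\lambda_{n-j}$ would produce an eigenvalue of multiplicity at least three, contradicting the hypothesis. Hence these $(n+1)/2$ numbers are pairwise distinct, each $\lambda_j$ with $1\le j\le (n-1)/2$ has multiplicity exactly two, and its eigenspace is exactly $E_j=\mathrm{span}_{\mathbb{C}}\{\mathbf{v}_j,\mathbf{v}_{n-j}\}$.

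Next I would identify the real vectors in each eigenspace. Since $\mathbf{v}_{n-j}=\overline{\mathbf{v}_j}$ and $\mathbf{v}_j,\mathbf{v}_{n-j}$ are linearly independent, a vector $\alpha\mathbf{v}_j+\beta\mathbf{v}_{n-j}\in E_j$ equals its own conjugate precisely when $\beta=\overline{\alpha}$; substituting $\mathbf{v}_j=\tfrac{1}{\sqrt{2}}(\mathbf{c}_j+i\mathbf{s}_j)$ and $\mathbf{v}_{n-j}=\tfrac{1}{\sqrt{2}}(\mathbf{c}_j-i\mathbf{s}_j)$ shows that the real points of $E_j$ form the two-dimensional real subspace $W_j=\mathrm{span}_{\mathbb{R}}\{\mathbf{c}_j,\mathbf{s}_j\}$. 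Now let $\mathcal{B}$ be any real orthonormal basis of eigenvectors. Each of its $n$ vectors lies in a single eigenspace; since the eigenspaces are mutually orthogonal, $\mathcal{B}$ must contain one unit vector of $\mathbb{R}\mathbf{v}_0$, namely $\pm\mathbf{v}_0$, and for each $j$ exactly two vectors lying in $W_j$, which are necessarily an orthonormal pair. Thus $\mathcal{B}$ is the disjoint union of $\{\pm\mathbf{v}_0\}$ with one orthonormal pair in each $W_j$.

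Finally I would parametrise an orthonormal pair in $W_j$. As $\{\mathbf{c}_j,\mathbf{s}_j\}$ is itself orthonormal, any unit vector of $W_j$ is $\cos\theta\,\mathbf{c}_j+\sin\theta\,\mathbf{s}_j$ for a unique $\theta\in[0,2\pi)$, and a unit vector orthogonal to it is $\pm(-\sin\theta\,\mathbf{c}_j+\cos\theta\,\mathbf{s}_j)$; hence the pair equals $(\pm\mathbf{c}_j',\pm\mathbf{s}_j')$ in the notation preceding the proposition. A short routine check using $\mathbf{c}_j'(\theta+\pi)=-\mathbf{c}_j'(\theta)$, $\mathbf{s}_j'(\theta+\pi)=-\mathbf{s}_j'(\theta)$, and the interchange $\mathbf{c}_j'\leftrightarrow\mathbf{s}_j'$ (which shifts $\theta$ by $\pi/2$ up to a sign) then reduces the angle to a unique $\theta_j\in[0,\pi/2]$, so that $\mathcal{B}=\mathcal{B}_n(\bm{\theta}_n)$ up to the ordering of basis vectors and multiplication of basis vectors by $\pm1$. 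I expect the only substantive step to be the structural one — showing that the real part of $E_j$ is precisely $W_j$ and that distinct eigenspaces are orthogonal — which is exactly where the maximal–multiplicity–two hypothesis enters; the angle bookkeeping is elementary, and the residual sign and ordering ambiguity is harmless for the application in Theorem~\ref{thm: circulant graph no real DQUE} since $\mu_\psi$ depends only on $(|\psi(v)|^2)_{v}$.
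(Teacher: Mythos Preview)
Your proposal is correct and follows essentially the same route as the paper: decompose $\mathbb{R}^n$ into the one-dimensional eigenspace for $\lambda_0$ and the two-dimensional real eigenspaces $W_j=\mathrm{span}_{\mathbb{R}}\{\mathbf{c}_j,\mathbf{s}_j\}$, then parametrise an orthonormal pair in each $W_j$ by an angle. If anything you are more careful than the paper's own argument, which asserts $\theta_j\in[0,\pi/2]$ directly from the unit-norm condition without spelling out the sign/ordering reduction you perform explicitly.
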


\begin{proof}
    Consider the basis $\mathcal{B}_n = \{\mathbf{v}_0, \mathbf{c}_j, \mathbf{s}_j\}_{j=1}^{(n-1)/2}$. Let $V_j$ be the subspace of $\mathbb{R}^n$ spanned by $\{\mathbf{c}_j, \mathbf{s}_j\}$. A real orthonormal eigenvector basis of $A(C_n(\mathbf{a}))$ has the form $\{\mathbf{v}_0, \mathbf{u}_j, \mathbf{w}_j\}_{j=1}^{(n-1)/2}$ where $\{\mathbf{u}_j, \mathbf{w}_j\}$ span $V_j$. Hence,
    \begin{align}
    \mathbf{u}_j &= \alpha_j \mathbf{c}_j + \beta_j \mathbf{s}_j\\
    \mathbf{w}_j &= \gamma_j \mathbf{c}_j + \delta_j \mathbf{s}_j 
    \end{align}
    for some $\alpha_j, \beta_j, \gamma_j, \delta_j \in \mathbb{R}$. Since $|\mathbf{u}_j| = |\mathbf{w}_j| = 1$, we have $\alpha_j=\cos \theta_{j}$ and $\beta_j= \sin \theta_j $ for $\theta_j\in [0,\pi/2]$ and $\gamma_j = \cos \phi_j$ and $\delta_j = \sin \phi_j$ for $\phi_j\in [0,\pi]$ . Since $\mathbf{u}_j \cdot \mathbf{w}_j = 0$, we see $\cos(\theta_j - \phi_j)=0$. Hence $\gamma_j = -\sin \theta_j$ and $\delta_j = \cos \theta_j$.
\end{proof}

 Proposition \ref{lem: NDQUE} states that there exists sequences of vertex subsets and sequences of eigenvectors from the bases $\{\mathcal{B}_n (\mathbf{0}_n) \}_{n\in \mathcal{A}}$ that do not display DQUE.
In fact, for any sequence of bases $\{\mathcal{B}_n (\bm{\theta}_n) \}_{n\in \mathcal{A}}$ 
there will always be sequences of eigenvectors and vertex subsets 
which do not exhibit DQUE if the eigenvalues of $A(C_n(\mathbf{a}))$ have a maximum multiplicity of two.
\begin{proposition}\label{propn: no DQUE max mult 2}
Let $\{C_n(\mathbf{a}) \}_{n\in \mathcal{A}}$ be a sequence of $2d$-regular circulant graphs and 
consider any sequence $\{\mathcal{B}_n (\bm{\theta}_n) \}_{n\in \mathcal{A}}$
of real orthonormal bases of eigenvectors of $A(C_n(\mathbf{a}))$.  Then
    there exists vertex subsets $\{S_n\}$ and sequences of eigenvectors in $\mathcal{B}_n(\bm{\theta}_n) $  which do not exhibit  
    DQUE.
\end{proposition}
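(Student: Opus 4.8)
The plan is to run the computation behind Proposition~\ref{lem: NDQUE} again, but to position the vertex interval adaptively so as to defeat the (now arbitrary) rotation angle at the top index. Assume $n$ is odd throughout, as in the definition of $\mathcal{B}_n(\bm{\theta}_n)$, and write $\theta=\theta_{(n-1)/2}$ for the last component of $\bm{\theta}_n$. For the top index $j=(n-1)/2$ one has $2\pi j k/n=\pi k-\pi k/n$, so, from the explicit components of $\mathbf{c}_j$ and $\mathbf{s}_j$,
\begin{equation}
[\mathbf{c}'_{(n-1)/2}]_k=(-1)^k\sqrt{\tfrac{2}{n}}\,\cos\!\Big(\tfrac{\pi k}{n}+\theta\Big),\qquad k=0,1,\dots,n-1,
\end{equation}
and in particular $\big|[\mathbf{c}'_{(n-1)/2}]_k\big|^2=\tfrac{2}{n}\cos^2(\pi k/n+\theta)$. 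Since $\cos^2$ has period $\pi$ this is $n$-periodic in $k$, and it varies slowly, on the scale of $n$; so it cannot equidistribute over a short arc of the cycle, but the arc has to be chosen depending on $\theta$.

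Fix any $p\in(0,1)$. For each $n$ set $m_n=\lfloor np\rfloor$ and $S_n=\{\ell_n,\ell_n+1,\dots,\ell_n+m_n-1\}$ with indices mod $n$, where $\ell_n\in\{0,\dots,n-1\}$ is chosen below; then $|S_n|/n\to p$. Taking $\varphi_n=\mathbf{c}'_{(n-1)/2}\in\mathcal{B}_n(\bm{\theta}_n)$,
\begin{equation}
y_n=\sum_{k\in S_n}\big|[\mathbf{c}'_{(n-1)/2}]_k\big|^2=\frac{2}{n}\sum_{k=\ell_n}^{\ell_n+m_n-1}\cos^2\!\Big(\frac{\pi k}{n}+\theta\Big),
\end{equation}
a Riemann sum for $x\mapsto\cos^2(\pi x+\theta)$, whose derivative is bounded by $\pi$ uniformly in $\theta$. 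Hence
\begin{equation}
y_n=2\int_{\ell_n/n}^{\ell_n/n+p}\cos^2(\pi x+\theta)\,\ud x+O\!\big(\tfrac1n\big)=p+\frac{\sin\pi p}{\pi}\,\cos\!\Big(\frac{2\pi\ell_n}{n}+2\theta+\pi p\Big)+O\!\big(\tfrac1n\big).
\end{equation}
Now I would choose $\ell_n$ to be the nearest integer, mod $n$, to $n\big(-\theta/\pi-p/2\big)$, so that $2\pi\ell_n/n+2\theta+\pi p\to 0\pmod{2\pi}$ and the cosine tends to $1$. Then $y_n\to p+\tfrac{\sin\pi p}{\pi}>p$ because $\sin\pi p>0$, so the pair $\{\varphi_n\}$, $\{S_n\}$ fails $y_n\to|S_n|/n$ and DQUE is violated.

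The routine points to check are that the Riemann-sum error is $O(1/n)$ uniformly in $\theta$ (which holds since the bound on $\tfrac{\ud}{\ud x}\cos^2(\pi x+\theta)$ does not depend on $\theta$) and that replacing $p$ by $m_n/n$ and $\ell_n/n$ by its rounded value costs only $O(1/n)$; I do not expect either to be an obstacle. The one genuinely new ingredient compared with Proposition~\ref{lem: NDQUE} is that, since the angles $\bm{\theta}_n$ are out of our control, the location of $S_n$ must be chosen as a function of $\theta_{(n-1)/2}$: the angle $\theta$ simply slides the ``density profile'' $\cos^2(\pi x+\theta)$ around the cycle, and we place the window over its bump rather than letting it average out to $1/2$. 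The argument is independent of $d$; the hypothesis of maximum multiplicity two enters only through the preceding proposition, which guarantees that every real orthonormal eigenbasis has the form $\mathcal{B}_n(\bm{\theta}_n)$.
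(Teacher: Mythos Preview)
Your argument is correct and follows essentially the same template as the paper's: pick the top-index rotated eigenvector, whose density profile on the cycle is a slowly varying sinusoid, and slide the vertex interval $S_n$ (as a function of $\theta_{(n-1)/2}$) onto an extremum of that profile so the Riemann-sum mass cannot equal $p$. The paper uses $\mathbf{s}'_{(n-1)/2}$ and centers $S_n$ at the trough, then bounds via $\sin^2 x<x^2$ to get $\limsup y_n\le \pi^2 p^3/6<p$, which forces the restriction $p<\sqrt{6}/\pi$; you instead take $\mathbf{c}'_{(n-1)/2}$, center $S_n$ at the peak, and evaluate the Riemann sum exactly to obtain $y_n\to p+\tfrac{\sin\pi p}{\pi}>p$ for every $p\in(0,1)$. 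Since $|[\mathbf{c}'_j]_k|^2+|[\mathbf{s}'_j]_k|^2=2/n$, these two computations are really dual to one another; your exact evaluation is a little cleaner and removes the artificial bound on $p$.
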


\begin{proof}
    Let $p < \sqrt{6}/\pi \approx 0.780$ and define a sequence of vertex subsets 
    \begin{equation}
        S_n = \left\{ 
        \left\lceil 
         - \frac{np}{2}-\frac{n\theta_{(n-1)/2}}{\pi}
       \right\rceil, 
       \left\lceil 
         - \frac{np}{2} - \frac{n\theta_{(n-1)/2}}{\pi}
       \right\rceil +1,
         \dots, \left\lfloor 
        \frac{np}{2} - \frac{n\theta_{(n-1)/2}}{\pi}
        \right\rfloor \right\}
    \end{equation}
    where the vertex $-j$ is interpreted as another label for the vertex $n-j$.
    Then $\mathrm{lim}_{n \rightarrow \infty}|S_n|/n = p$.  
        Note that,
    \begin{equation}
        \mathbf{s}_j' = \sqrt{\frac{2}{n}}\left( \sin\left(- \theta_j \right), \sin \left(\frac{2\pi j}{n} - \theta_j \right),   \dots,
     \sin\left(\frac{2(n-1)\pi j}{n} - \theta_j\right)\right) \ .
    \end{equation}    
    If we use the sequence of basis vectors $\left\{ \mathbf{s}_{(n-1)/2}' \right\}_{n\in \mathcal{A}}$ to define a sequence of probability measures on $\mathcal{V}$ then the measure of $S_n$ is,    
    \begin{align}
        y_n &= \sum_{k\in S_n} \frac{2}{n} \sin^2\left(\frac{\pi k}{n} + \theta_{(n-1)/2} \right)
        < \frac{4}{n}\int_0^{\frac{pn}{2}}  \left(\frac{\pi x}{n}\right)^2\ud x
        = \frac{\pi^2p^3}{6} \ .
    \end{align}
    Therefore $\mathrm{lim}_{n\rightarrow \infty}y_n \leq \pi^2p^3 / 6 < p$. Hence there exist subsets of vertices where $\lim_{n \rightarrow \infty}y_n \neq \lim_{n \rightarrow \infty}|S_n|/n = p$.
\end{proof}

Combining propositions \ref{prop: spec multiplicity} and \ref{propn: no DQUE max mult 2} produces theorem \ref{thm: circulant graph no real DQUE}.

\section{Concluding Remarks}\label{sec conclusions}

In this article we investigated the DQUE property on families of circulant graphs.  This example has a number of unusual features.  Circulant graphs are not expanders (sparse yet well connected graphs) a property that can fill the roll played by ergodicity in a proof of QE, so they might be expected not to exhibit eigenfunction equidistribution. 
 However, contrary to such an expectation, they posses eigenvector bases which are perfectly equidistributed over the graph.  So, for example, circulant families satisfy the DQUE property of \cite{MTZ23}, see proposition \ref{prop: Circulant DQUE}.  

 On the other hand, while recent results \cite{MTZ23, BL??} for eigenfunction equidistribution on the Cayley graphs of quasirandom groups also hold for real orthonormal eigenfunction bases, we see that this is not the case for circulant graph families.  We determine that there do not exist real orthonormal eigenfunction bases where all sequences of eigenfunctions have the DQUE property for families of $4$-regular circulant graphs of prime order.  We would anticipate that this will also hold for most families of circulant graphs, as such a result follows for any family where the maximum multiplicity of the eigenvalues of large graphs is two.  

So circulant graphs provide a quixotic model of eigenfunction equidistribution.  Possessing both perfectly equidistributed eigenfunctions in the absence of a well connected graph and discrete quantum unique ergodicity for orthonormal eigenfunction bases without discrete quantum unique ergodicity for real orthonormal bases.   

\bmhead{Acknowledgments}

JH would like to thank Cyril Letrouit and  Maricela Ramirez for helpful comments.
CP would like to thank the Baylor Honors Program where research for her honors thesis initiated this work.






\bibliography{jmh-bibliography}

\end{document}